\newcommand{\ZZ}{\mathbb{Z}}
\newcommand{\NN}{\mathbb{N}}
\newcommand{\VV}{\mathcal{V}}
\newcommand{\FF}{\mathcal{F}}
\newcommand{\DD}{\mathcal{D}}
\newcommand{\SSS}{\mathcal{S}}
\newcommand{\RR}{\mathrm{R}}
\newtheorem{thm}{Theorem}
\newtheorem{claim}{Claim}
\newtheorem{prop}{Proposition}
\newtheorem{ppty}{Property}
\newtheorem{conj}{Conjecture}
\theoremstyle{remark}
\newtheorem{deff}{Definition}
\theoremstyle{definition}
\def\BState{\State\hskip-\ALG@thistlm}
\title{The 2-domination and Roman domination numbers of grid graphs}
\author{Micha\"el Rao \and  Alexandre Talon}
\affiliation{ENSL, Univ Lyon, UCBL, LIP, MC2, F-69342, LYON Cedex 07, France}
\keywords{domination, 2-domination, Roman domination, grid graphs, Cartesian product of paths, transfer matrix, (min,+)-algebra}
\begin{document}
\publicationdetails{21}{2019}{1}{9}{4952}
\received{2018-11-02}
\revised{2019-04-17}
\accepted{2019-5-14}

\maketitle
\begin{abstract}
We investigate the 2-domination number for grid graphs, that is the size of a smallest set $D$ of vertices of the grid such that each vertex of the grid belongs to $D$ or has at least two neighbours in $D$. We give a closed formula giving the 2-domination number of any $n \!\times\! m$ grid, hereby confirming the results found by Lu and Xu, and Shaheen et al. for $n \leq 4$ and slightly correct the value of Shaheen et al. for $n = 5$. The proof relies on some dynamic programming algorithms, using transfer matrices in (min,+)-algebra. We also apply the method to solve the Roman domination problem on grid graphs.
\end{abstract}

\section{Introduction and notations}
A \emph{dominating set} $D$ in a graph $G$ is a subset of the vertices such that every vertex in $V(G) \setminus D$ has at least one neighbour in $D$. The \emph{domination number} of $G$, denoted by $\gamma(G)$ is the minimum size of a dominating set of $G$.
In 1993 Chang \cite{chang} conjectured that the domination number for a grid graph of arbitrary size was $\gamma(G_{n,m}) = \left\lceil \frac{(n+2)(m+2)}{5}\right\rceil-4$. He also showed that this was actually an upper bound. Gonçalves et al. \cite{rao} proved Chang's conjecture in 2011 by showing that this was a lower bound.
Several generalisations of the domination problem have also been studied in the literature (see for example \cite{bon}).
We adapt here the method used in \cite{rao} to the 2-domination problem and the Roman domination, obtaining closed formulas for these two problems on grids. This confirms the results found by \cite{lu-xu, shaheen} for $n \leq 4$ and slightly corrects the result by \cite{shaheen} for $n = 5$.
This also confirms the results of Pavlič and Žerovnik \cite{pav} for $n\le 8$ and prove that the upper bound given by Currò \cite{curro} is tight.

The domination problem is one of many problems which are hard for general graphs, but are easy to solve for graphs of bounded treewidth. The grids are among the simplest graphs which neither have a bounded treewidth nor a bounded cliquewidth, but for which these kinds of problems are usually difficult to tackle. Gonçalves et al. managed to solve the domination problem on grids. In this paper we generalise their techniques and try to see on what kind of problems they can be used.
The domination problems are related to tiling problems. For instance, \autoref{domination-shape} shows the shape associated with the domination problem. A smallest dominating set in a grid is equivalent to a smallest covering set of the rectangle with this shape. The method of Gonçalves et al. works thanks to the fact that the shape has the following three properties. First, it can tile (that is, cover without overlaps) the infinite plane. Second, there is a unique way, up to mirrors and translations, to tile the plane with this shape. Third, we can find optimal solutions which consist in projecting a tiling of the plane, cropping it and modifying only tiles at bounded distance from the border. We will discuss these properties in the conclusion.

\begin{figure}[h!]
\centering
\includegraphics[scale=0.6]{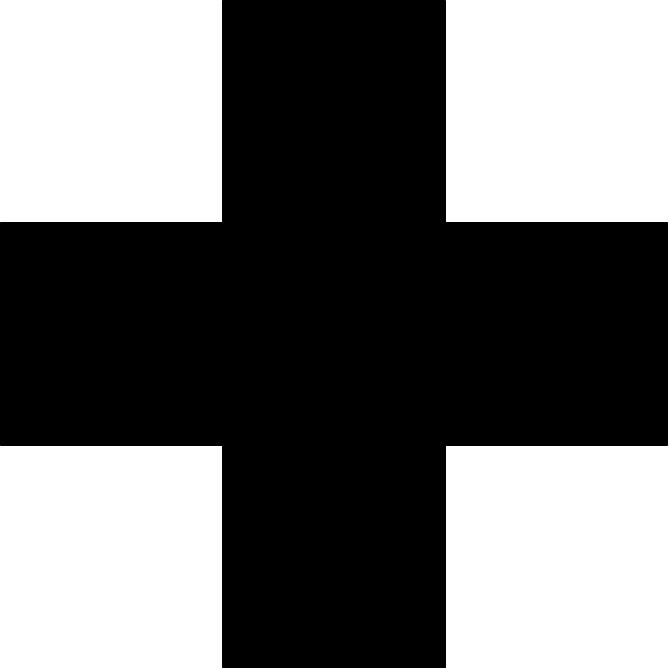}
\caption{The shape corresponding to the domination tiling problem.}
\label{domination-shape}
\end{figure}

We adapt here the method developed by Gonçalves et al. to some other domination problems: the 2-domination and the Roman domination. Each of these two problems is also related to a tiling problem, which also has the properties we have just mentioned.

The source of the program used to prove the results can be found in the arXiv version of this paper: \url{https://arxiv.org/format/1810.12896}, selecting the "Download source" option.

\section{Method of the proofs and application to the 2-domination problem}
 A \emph{2-dominating} set of $G$ is a subset $D \subset V(G)$ such that every vertex not in $D$ has at least two neighbours in $D$. The \emph{2-domination number} of a graph $G$, denoted by $\gamma_2(G)$ is the minimum size of a 2-dominating set of $G$. Here we compute all the $\gamma_2$ values for grid graphs. We denote by $G_{n,m}$ the grid graph with $n$ lines and $m$ columns, and by $\gamma_2(n,m)$ the 2-domination number for an $n\!\times\! m$ grid. We take the viewpoint of a grid whose cells represent vertices.
 In this section we give the tools to prove the following theorem.

\vspace{8cm}
\begin{thm}
For all $1 \leq n \leq m$,\\
\[\gamma_2(n,m) =    \left\{
\setstretch{1.25}
\begin{array}{ll}
      \left\lceil\frac{m+1}{2}\right\rceil & \quad\textrm{if }n = 1 \\
      m & \quad\textrm{if }n=2 \\
      m+\left\lceil\frac{m}{3} \right\rceil & \quad\textrm{if } n=3 \\
      2m-\left\lfloor \frac{m}{4} \right\rfloor& \quad\textrm{if } n=4\textrm{ and } m\mod 4 = 3 \\
      2m-\left\lfloor \frac{m}{4}+1 \right\rfloor& \quad\textrm{if } n=4\textrm{ and } m\mod 4 \neq 3 \\
      2m+\left\lceil\frac{m}{7}\right\rceil+1& \quad\textrm{if } n=5\textrm{ and } m\mod 7 \in \{0,6\}\\
      2m+\left\lceil\frac{m}{7}\right\rceil& \quad\textrm{if } n=5\textrm{ and } m\mod 7 \notin \{0,6\}\\
      2m+\left\lfloor\frac{6m}{11}\right\rfloor+1& \quad\textrm{if } n=6\textrm{ and } m\mod 11 \in \{0,2,6\}\\
      2m+\left\lfloor\frac{6m}{11}\right\rfloor+2& \quad\textrm{if } n=6 \textrm{ and } m\mod 11 \notin \{0,2,6\}\\
      3m-\left\lfloor\frac{m}{18}\right\rfloor+1& \quad\textrm{if } n=7 \textrm{ and } m > 9\textrm{ and }m\mod 18 \leq 9\\
      3m-\left\lfloor\frac{m}{18}\right\rfloor+1& \quad\textrm{if } n=7 \textrm{ and } (m \leq 9\textrm{ or }m\mod 18 > 9)\\
      3m+\left\lfloor\frac{m}{3}\right\rfloor& \quad\textrm{if } n=8 \textrm{ and } m \mod 3 = 1\\
      3m+\left\lfloor\frac{m}{3}\right\rfloor+1& \quad\textrm{if } n=8 \textrm{ and } m \mod 3 \neq 1\\
      \left\lfloor \frac{(n+2)(m+2)}{3}\right\rfloor-6 & \quad\textrm{if } n\geq 9.
\end{array}
\right. \]
\label{th-2dom}
\end{thm}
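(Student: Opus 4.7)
The plan is to combine explicit constructions for the upper bound with a dynamic-programming / transfer-matrix argument for the lower bound, following the general template developed by Gonçalves et al.~\cite{rao} for the ordinary domination problem. For the upper bound, we exhibit, for each combination of $n$ and each residue of $m$ appearing in the statement, an explicit 2-dominating set whose cardinality matches the claimed value. For $n \leq 8$ this consists of a short periodic pattern repeated along the length of the grid, with at most $O(1)$ corrections at the two ends; for $n \geq 9$, we project an optimal periodic tiling of the infinite plane by the 2-domination shape onto an $n \times m$ window and correct only the cells adjacent to the boundary, yielding the value $\lfloor (n+2)(m+2)/3 \rfloor - 6$.

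For the lower bound, the main tool is a transfer matrix $T$ over the $(\min,+)$-algebra. Scanning the grid column by column, we encode as a state the information needed to enforce the 2-domination constraint: the membership of the current and previous columns in $D$, together with, for the previous column, the number of neighbours already in $D$ (which must reach $2$ before the column is ``closed''). The entry $T[s,s']$ is the minimum number of vertices added to $D$ along the corresponding column transition, with value $+\infty$ if the transition is inconsistent with 2-domination. Then $\gamma_2(n,m)$ equals the minimum entry of a suitable power of $T$ between valid initial and final states, where products are taken in $(\min,+)$. For each $1 \leq n \leq 8$, the state space is finite and small enough to enumerate with the companion program; we verify optimal walks matching the formula for every residue of $m$ modulo the period of $T$, and the announced closed form then follows by periodicity.

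For $n \geq 9$ the transfer matrix becomes intractable to handle directly, so we need a structural argument. The idea is to slice the grid into horizontal strips of bounded width and, using the three tiling properties recalled in the introduction---tileability, uniqueness up to symmetry, and stability under bounded boundary correction---show that any 2-dominating set of an $n \times m$ grid, when restricted to a region sufficiently far from the four borders, must agree with a translate of the unique infinite periodic tiling. A discharging / amortised counting argument over the projected tiles, combined with a coarse bound on the $O(n+m)$ boundary cells, then produces the matching lower bound $\lfloor (n+2)(m+2)/3 \rfloor - 6$.

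The main obstacle is precisely this last step: transferring the exact lower bounds obtained for small $n$ into a uniform bound valid for all $n \geq 9$ requires reproving, in the 2-domination setting, the delicate uniqueness-of-tiling / rigidity lemma that powers the argument of~\cite{rao}; in particular one must show that near-optimal configurations cannot deviate from the periodic pattern except within a strip of bounded width. A secondary but more tedious difficulty is the case analysis for $n \leq 8$: the thirteen sub-formulas reflect genuine arithmetic irregularities due to the mismatch between the window width $m$ and the period of the optimal pattern, and each sub-case must be verified individually by exhibiting both a construction and a matching walk in $T$.
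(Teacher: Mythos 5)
Your plan for $1\leq n\leq 8$ matches the paper's: a $(\min,+)$ transfer matrix over column states, explicit periodic constructions for the upper bound, and extraction of the closed forms from eventual periodicity of the powers of $T$. One caveat even there: the eventual periodicity $T^{m+k}=T^m+t$ is not automatic and must be justified; the paper does this by proving that $T$ is \emph{primitive} (some power has all entries finite) and deducing periodicity of $M^iV$ for primitive $M$. Your sketch invokes ``periodicity of $T$'' without this step, and also note that the paper actually runs the direct method up to $n=12$, not $8$, because the large-$n$ argument only takes over at $n=13$.

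The genuine gap is in your treatment of $n\geq 9$. You propose to prove a rigidity statement --- that any (near-)optimal 2-dominating set must coincide with a translate of the unique periodic tiling outside a bounded-width frame --- and then finish with a discharging count plus ``a coarse bound on the $O(n+m)$ boundary cells.'' This is not what powers the argument of Gonçalves et al., and it would not suffice: a coarse boundary estimate cannot recover the exact additive constant $-6$ in $\lfloor (n+2)(m+2)/3\rfloor-6$, and the rigidity lemma itself is exactly the step you concede you cannot carry out. The paper's route avoids rigidity entirely. It defines the \emph{loss} $\ell(D,n,m)=4|D|-2(nm-|D|)$, which is an identity ($|D|=(2nm+\ell)/6$, so lower-bounding $\ell$ lower-bounds $\gamma_2$) and which decomposes as a sum of nonnegative local terms (wasted influence). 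Hence the total loss is at least the minimum loss restricted to a border of fixed height $h=6$, and that border minimum is computed \emph{exactly} by a second transfer-matrix computation that walks around the frame, with a band matrix $T_a$ and a corner matrix $C_a$, via $\min_S\bigl((T_a^{m-13}C_aT_a^{n-13}C_a)^2[S][S]\bigr)$; primitivity of $T_a$ again gives eventual periodicity, so checking $13\leq n\leq m\leq 35$ plus the recurrence covers all larger grids. The matching upper bound is then the explicit projected-tiling construction with corner modifications, as you describe. Without replacing your rigidity-plus-coarse-count step by such an exact border computation (or by an actual proof of the rigidity lemma strong enough to pin down the constant), the lower bound for $n\geq 9$ is not established.
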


The first subsection uses some well-known techniques to establish the 2-domination values for grids of small height, whereas the second subsection uses the recent technique of loss, which was introduced by Gonçalves et al. \cite{rao} in order to obtain the values for arbitrarily large height for the domination problem in grids.

If $V$ is a vector of size $n$ indexed from 0 to $n-1$, we will sometimes refer, for concision, to some $V[i]$ where $i$ is negative or greater than $n-1$. This means that any condition or operation on invalid indices is to be ignored: for instance, "both $V[i-2] = 1$ and $V[i+1] = 1$" is necessarily false if $i < 2$ or $i > n-2$.
We denote by $|V|_p$ the number of entries of $V$ which have value $p$. 
Operations on matrices are always done in the $(\min,+)$-algebra.
In all what follows, unless explicitly specified otherwise, the grid we consider always has $n$ lines and $m$ columns. Note that we require $n \leq m$ in all theorems.
We denote by $a\mod b$ the remainder of the Euclidean division of $a$ by $b$.

\subsection{Computing 2-domination numbers for small $n$}
We use here a dynamic programming approach. Our algorithm is exponential in the number of lines, but for a fixed number of lines $n$, it is linear in the number of columns.

Let $\SSS = \{$ \textsc{stone}, \textsc{need\_one}, \textsc{ok} $\}$ be the set of cell states. If $D$ is a 2-dominating set of cells of the grid $G_{n,m}$ let $f(D) \in (\SSS^n)^m$ be such that $f(D)[i][j]$ is \textsc{stone} if $(i,j) \in D$, \textsc{ok} if at least two among $(i-1,j)$, $(i, j-1)$ and $(i, j+1)$ are in $D$, or \textsc{need\_one} otherwise. Note that the state of a cell does not depend on the values of the cells of the next columns. We also define, for $0 \leq i < n$, $f_i(D)$ to be the column $i$ of $f(D)$: for all $j\in\{0,\ldots,n-1\}$, $f_i(D)[j]=f(D)[i][j]$. Note that, since $D$ is 2-dominating, what precedes implies that $f_i(D)[j]$ = \textsc{need\_one} if exactly one among $(i-1,j)$, $(i, j-1)$ and $(i, j+1)$ is in $D$.

We now define the set of \emph{valid} states $\VV = \cup_{0\leq i < m}{\{f_i(D) : D \textrm{ is a 2-dominating set}\}}$. $\VV$ is composed of the states which we can find in some 2-dominating set: all states which can appear. Among these, we define the set of \emph{first} states $\FF = \{f_0(D) : D \textrm{ is a 2-dominating set}\}$. Finally, we define the set of \emph{dominated} states $\DD = \{f_{m-1}(D) : D \textrm{ is a 2-dominating set}\}$. $\FF$ is the set of states which can be the first column of a 2-dominating set, that is whose entries only depend on themselves and not on a previous column. $\DD$ is the list of states which do not need a next column to be 2-dominated: they are dominated by themselves and their previous column.

We now define the relation of \emph{compatibility} $\RR$: we say that a state $S'\in \VV$ is compatible with $S\in \VV$, and write $S\RR S'$ if there exist a 2-dominating set $D$ and $i \in \{0, \ldots, m-2\}$ such that $\textrm{f}_i(D) = S$ and $\textrm{f}_{i+1}(D) = S'$. 
 Defining these states enables us to use the principles of dynamic programming: instead of enumerating all possible 2-dominating sets, we realise that the information conveyed in $f(D)$ is enough, and that we only need the information at a column $i$ to continue to column $i+1$. In particular, we do not need to know what happened in previous columns.

To illustrate these concepts, we give the rules defining the sets $\VV$, $\FF$, $\DD$ and the relation $\RR$. $S \in \VV$ if and only if for all $ i \in \{0, \ldots, n-1\}$:
\begin{itemize}[topsep=0pt, noitemsep]
\item if $S[i] = $\textsc{need\_one} then at most one among $S[i-1]$, $S[i+1]$ is \textsc{stone};
\item if $S[i] = $\textsc{ok} then at least one among $S[i-1]$ and $S[i+1]$ is \textsc{stone}.
\end{itemize}
\vspace{5pt}
$S \in \FF$ if and only if for all $i \in \{0, \ldots, n-1\}$:
\begin{itemize}[topsep=0pt]
\setlength{\itemsep}{0pt}
\item if $S[i] = $\textsc{need\_one} then exactly one among $S[i-1]$, $S[i+1]$ is \textsc{stone};
\item if $S[i] = $\textsc{ok} then both $S[i-1]$ and $S[i+1]$ are \textsc{stone} (so $1 \leq i < m-1$);
\end{itemize}
\vspace{5pt}
A state $S$ belongs to $\DD$ if and only if $S\in \VV$ and none of its entries is \textsc{need\_one}.

Finally, $S \RR S'$ if and only if for all $i \in \{0, \ldots, n-1\}$:
\begin{itemize}
\setlength{\itemsep}{0pt}
\item if $S[i] = $\textsc{need\_one} then $S'[i] = $\textsc{stone};
\item if $S'[i] = $\textsc{need\_one} then exactly one among $S'[i-1]$, $S'[i+1]$ and $S[i]$ is \textsc{stone};
\item if $S'[i] = $\textsc{ok} then at least two among $S'[i-1]$, $S'[i+1]$ and $S[i]$ are \textsc{stone};
\end{itemize}

\begin{claim}
Let $F$ be the vector of size $|\VV|$ such that $F[S] = |S|_{\text{\textsc{stone}}}$ if $S\in \FF$ or $+\infty$ otherwise. Let $D$ be the vector of size $|\VV|$ such that $D[S] = 0$ if $S \in \DD$ or $+\infty$ otherwise. Let $T$ be the square matrix with $|\VV|$ lines such that $T[S][S'] = |S'|_{\text{\textsc{stone}}}$ if $S \RR S'$ or $+\infty$ otherwise.\\Then $\gamma_2(n,m) = F^{\mathsmaller T}T^{m-1}D$.\\
(We recall that the products of matrices are done in the $(\min, +)$-algebra.)
\label{claim-exact}
\end{claim}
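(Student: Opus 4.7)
The plan is to establish a bijection between 2-dominating sets of $G_{n,m}$ and sequences of states $(S_0, S_1, \ldots, S_{m-1})$ satisfying $S_0 \in \FF$, $S_{m-1} \in \DD$, and $S_i \RR S_{i+1}$ for all $0 \leq i \leq m-2$, in such a way that the size of the 2-dominating set equals $\sum_{j=0}^{m-1} |S_j|_{\text{\textsc{stone}}}$. Once this is done, the claim follows directly from the definition of matrix products in the $(\min,+)$-algebra: the expression $F^{\mathsmaller T}T^{m-1}D$ is by definition
\[
\min_{S_0,\ldots,S_{m-1} \in \VV} \Big( F[S_0] + \sum_{i=0}^{m-2} T[S_i][S_{i+1}] + D[S_{m-1}] \Big),
\]
and the defining values of $F$, $T$, and $D$ make this minimum finite precisely on valid sequences, where it evaluates to $|S_0|_{\text{\textsc{stone}}} + \sum_{i=1}^{m-1} |S_i|_{\text{\textsc{stone}}} + 0 = \sum_j |S_j|_{\text{\textsc{stone}}}$.

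For the forward direction, given a 2-dominating set $D$ I would take $S_j := f_j(D)$. By construction, $S_0 \in \FF$ and $S_{m-1} \in \DD$ (the latter because the cells of the last column must be dominated without help from a non-existent column $m$), and $S_j \RR S_{j+1}$ follows from the definition of $\RR$. Moreover $\sum_j |S_j|_{\text{\textsc{stone}}} = |D|$. For the converse, given a valid sequence I would reconstruct $D$ as the set of cells $(i,j)$ with $S_j[i] = \textsc{stone}$ and verify that $D$ is 2-dominating: for any cell $(i,j) \notin D$, the label $S_j[i]$ is either \textsc{need\_one} or \textsc{ok}, and in both cases the rules defining $\VV$, $\FF$, $\DD$, $\RR$ were written so as to guarantee exactly the required count of stone neighbours among $(i-1,j)$, $(i+1,j)$, $(i,j-1)$, $(i,j+1)$, with the boundary refinements in $\FF$ (no left neighbour) and $\DD$ (no right neighbour) handling the first and last columns.

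The main obstacle is this routine but delicate consistency check: one must verify that the four local rules given in the excerpt (for $\VV$, $\FF$, $\DD$, and $\RR$) jointly capture the 2-domination condition, case-checking on the labels \textsc{stone}, \textsc{need\_one}, \textsc{ok} and on whether each neighbour lies on the boundary. In particular, one has to argue that when $S_j[i] = \textsc{need\_one}$, combining the constraint from $\VV$ (at most one of the vertical neighbours in column $j$ is stone) with the constraint from $S_{j-1}\RR S_j$ (the total horizontal+left contribution gives exactly the missing stones) produces a total of at least two stone neighbours of $(i,j)$, and symmetrically for \textsc{ok}. Once this case analysis is complete, both directions of the bijection are immediate and the identity on $|D|$ is just a summation over columns, whence the claim.
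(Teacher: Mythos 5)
Your proposal is correct and takes essentially the same route as the paper: the paper phrases the argument as a column-by-column dynamic-programming invariant ($F^{\mathsmaller T}T^{i}$ records the minimum number of stones over partial 2-dominating sets ending in each state), which is exactly your expansion of the $(\min,+)$ product as a minimum over compatible state sequences. You are in fact more explicit than the paper about the one real verification needed --- that the local rules for $\VV$, $\FF$, $\DD$, $\RR$ faithfully encode the 2-domination condition in both directions of the correspondence --- which the paper's proof leaves implicit.
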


\begin{proof}
Let $m \geq 1$. $V = F^{\mathsmaller T}T^{m-1}$ is a vector such that if $S \in \VV$ then $V[S]$ is the minimum size of a set $X$ which 2-dominates the subgrid with $n$ lines and $m-1$ columns, and such that $f(X)[m-1] = S$. However, we are interested in a 2-dominating set, therefore $S$ should be 2-dominated as well. Thus \hspace{4cm}$VD = \min_{S \in \DD}{V[S]}$ gives us the minimum size of any 2-dominating set.
\end{proof}

This claim leads to a simple algorithm to generate the different sets and the compatibility relation, and then compute the product and exponentiation of matrices, and a matrix vector product. The matrix $T$ is a transfer matrix, whose exponentiation propagates the fact of being 2-dominated one column further.
This is enough to compute the 2-domination numbers for small $n$ and $m$, but our goal is to find all the numbers for small $n$ and arbitrary $m$. What follows fills this hole.\\

We say that a matrix $M$ is \emph{primitive} if there exists an integer $k > 0$ such that $\max(M^k) < +\infty$.
\begin{claim}
$T$ is primitive.
\label{claim-primitive}
\end{claim}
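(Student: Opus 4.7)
The plan is to exhibit the all-\textsc{stone} column $S^* = (\textsc{stone}, \ldots, \textsc{stone}) \in \SSS^n$ as a universal hub in the compatibility graph of $T$. First, $S^* \in \VV$, since the trivial 2-dominating set $D = V(G)$ gives $f_i(D) = S^*$ at every column. Directly reading the definition of $\RR$ then yields two structural facts: every $S \in \VV$ satisfies $S \RR S^*$ (the only clauses that could fail concern \textsc{need\_one}/\textsc{ok} entries of the target column, and $S^*$ has none), and in particular $S^* \RR S^*$ gives a self-loop at $S^*$.

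The central step would be to show that $S^*$ reaches every $S' \in \VV$ by some finite $\RR$-path. By definition, $S' = f_i(D)$ for some 2-dominating set $D$ of an $n \times m'$ grid and some $0 \leq i < m'$. If $i \geq 1$, I would prepend an all-\textsc{stone} column to $D$ to form $D^+$; since stones can only help domination, $D^+$ remains 2-dominating, and because $f$ only inspects the previous column and the two vertical same-column neighbours, one verifies $f_0(D^+) = S^*$ and $f_{i+1}(D^+) = f_i(D) = S'$. This produces an $\RR$-path from $S^*$ to $S'$ of length $i+1$.

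The hard case is $i = 0$, that is $S' \in \FF$: a single prepended all-\textsc{stone} column would put an extra stone in the previous-column slot of column~$0$ of $D$, turning any \textsc{need\_one} entry of $S'$ into \textsc{ok}. To fix this I would prepend two columns---an all-\textsc{stone} column followed by a buffer $P$ whose stones sit exactly at the rows where $S'$ is \textsc{stone}. The non-\textsc{stone} rows of $P$ then shield column $0$ of $D$ from the extra stones, so its state at the new position~$2$ is unchanged and equals $S'$. The buffer itself is 2-dominated because each of its non-\textsc{stone} entries has a stone on its left (the all-\textsc{stone} column) and, thanks to the 2-domination of $D$, at least one of its vertical neighbours in $P$ is stone. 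I expect this buffer construction to be the main obstacle, since verifying simultaneously that 2-domination is preserved and that the state at the target column is unchanged takes careful bookkeeping.

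Finally, since $\VV$ is finite, the length $\ell_{S'}$ of a shortest $\RR$-path from $S^*$ to $S'$ is well defined for every $S' \in \VV$, and $L := \max_{S' \in \VV} \ell_{S'}$ is finite. Setting $k = L + 1$, for arbitrary $S, S' \in \VV$ I would concatenate the edge $S \RR S^*$, then $L - \ell_{S'}$ self-loops at $S^*$ (using $S^* \RR S^*$), then an $\RR$-path of length $\ell_{S'}$ from $S^*$ to $S'$; this yields a length-$k$ path from $S$ to $S'$ for every pair, so $T^k[S][S'] < +\infty$ for all $S, S'$ and $T$ is primitive.
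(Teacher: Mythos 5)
Your proof is correct, and it rests on the same key idea as the paper's: the all-\textsc{stone} column $S^*$ is a hub that every valid state can enter in one step. The difference is in how you travel from $S^*$ back out to an arbitrary target $S'$. The paper constructs, for any target $S_2\in\VV$, a single intermediate column $S_1$ (a buffer whose stones are placed exactly where $S_2$'s \textsc{need\_one}/\textsc{ok} entries require them) so that $S_0 \RR S_* \RR S_1 \RR S_2$ for \emph{all} pairs, giving the uniform and explicit conclusion $T^3 < +\infty$; the construction of $S_1$ is left to the reader. You instead realise $S'$ as a column of an actual 2-dominating set, prepend one or two columns to that set, and then equalise path lengths with self-loops at $S^*$, obtaining $T^{L+1} < +\infty$ for a finite but non-explicit exponent $L+1$. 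Both arguments establish primitivity as defined; the paper's buys a concrete exponent ($k=3$) at the cost of an omitted verification, while yours trades the explicit exponent for a fully checked construction (your two-column prepend in the $S'\in\FF$ case is essentially the paper's buffer column in disguise). One small point worth making explicit in your write-up: your case split realises $S'$ via the \emph{semantic} definition of $\VV$ (as $f_i(D)$ for some 2-dominating $D$), whereas the matrix $T$ is built from the local rules; the paper implicitly identifies the two, and your argument inherits that identification, so it is no worse off, but it is worth a sentence.
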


\begin{proof}
Let $S_0, S_2 \in \VV$. Let $S_*$ be the state whose entries all are \textsc{stone}. There exists some $S_1 \in \VV$ such that $S_0 \RR S_*$, $S_* \RR S_1$ and $S_1 \RR S_2$. We leave the construction of $S_1$ to the reader: put the necessary stones and fill the rest accordingly. We conclude that $T^3 < +\infty$.
\end{proof}

\begin{prop}
Let $M$ be a primitive matrix of size $n$ and $k$ be such that $M^k < +\infty$. Let $V$ be a vector of size $n$ which has at least one entry different from $+\infty$. We put $V_i = M^iV$. Then there exist some $i_0, p$ and $r$ such that for all $i \geq i_0$, $V_i = V_{i-r}+p$.
\label{prop-primitive}
\end{prop}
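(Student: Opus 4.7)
The plan is a pigeonhole argument in the $(\min,+)$-algebra: I will show that for $i$ large enough the vectors $V_i$, viewed up to a global additive shift, range in a finite set, so two of them must coincide up to such a shift, and then I propagate this coincidence by applying $M$.

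First I would establish a uniform \emph{spread bound}. Let $a = \min(M^k)$ and $b = \max(M^k)$, both finite by hypothesis. For $i \geq k$, writing
\[
V_i[s] = \min_{s'}\bigl(M^k[s][s'] + V_{i-k}[s']\bigr)
\]
and bounding every matrix entry between $a$ and $b$ gives $a + \min(V_{i-k}) \leq V_i[s] \leq b + \min(V_{i-k})$ (the min is taken over the finite coordinates of $V_{i-k}$). Hence $\max(V_i) - \min(V_i) \leq b - a$ for all $i \geq k$. Before this, one needs to verify that the coordinates of $V_i$ are actually finite for $i \geq k$: for $V_k$ this follows from $M^k < +\infty$ together with the hypothesis that $V$ has at least one finite entry; for $i > k$ one uses that each row of $M$ must contain at least one finite entry (otherwise the same row of $M^k$ would be all infinite). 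This bookkeeping is, in my opinion, the only mildly delicate step of the proof.

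Next I would define the normalized vector $\hat V_i = V_i - \min(V_i)\,\mathbf{1}$. By the spread bound, for $i \geq k$ every coordinate of $\hat V_i$ lies in $\{0,1,\ldots,b-a\}$; since the matrix entries are integers (they are counts of stones in the 2-domination application), the $\hat V_i$ lie in a finite set. By pigeonhole there exist indices $k \leq i_1 < i_2$ with $\hat V_{i_1} = \hat V_{i_2}$, equivalently $V_{i_2} = V_{i_1} + p\,\mathbf{1}$ with $p = \min(V_{i_2}) - \min(V_{i_1})$.

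Finally, I would use the $(\min,+)$-linearity identity $M(W + p\,\mathbf{1}) = MW + p\,\mathbf{1}$ to propagate: iterating $M$ on both sides of $V_{i_2} = V_{i_1} + p\,\mathbf{1}$ yields $V_{i+r} = V_i + p\,\mathbf{1}$ for every $i \geq i_1$, where $r = i_2 - i_1$, and one sets $i_0 = i_1$. This is exactly the claimed ultimately-periodic behaviour, up to the scalar shift $p$.
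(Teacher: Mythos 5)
Your proposal is correct and follows essentially the same route as the paper's proof: finiteness bookkeeping, the uniform bound $\max(V_i)-\min(V_i)\leq \max(M^k)-\min(M^k)$ obtained by sandwiching $V_i[s]$ between $\min(M^k)+\min(V_{i-k})$ and $\max(M^k)+\min(V_{i-k})$, pigeonhole on the normalized vectors, and propagation by $(\min,+)$-linearity. If anything you are slightly more careful than the paper, which implicitly assumes nonnegative integer entries and leaves the final propagation step $M(W+p\mathbf{1})=MW+p\mathbf{1}$ unstated.
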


\begin{proof}
Since $M^k < +\infty$, $V_k = M^kV$ has at least one entry different from $+\infty$. This implies that $V_{k-1}$ has one entry different from $+\infty$. Since $V_k = M^kV = MV_{k-1}$ we can prove by induction that all the $V_i$'s share this property.
Let $i \geq k$, and $b_0$ be such that $V_{i-k}[b_0]$ is a minimum of $V_{i-k}$. $V_i[a] \leq M^k[a][b_0]V_{i-k}[b_0] \leq \alpha + V_{i-k}[b_0]$, where $\alpha = \max(M^k)$.
Similarly, $V_i[a] \geq \min_{a', b'}{(M^k[a'][b'])} + V_{i-k}[b_0] \geq V_{i-k}[b_0]$.
 These two inequalities imply that $\max(V_i) - \min(V_i) \leq \alpha$ for every $i \geq k$. Let us write, for $i \geq k$, $V_i = n_i + V'_i$ where $n_i \in \NN$ and each entry of $V'_i$ is between $0$ and $\alpha$. Since there are at most $\alpha^n$ different vectors of size $n$ which have all entries between 0 and $\alpha$, there exist $k \leq i_1 < i_2$ such that $V'_{i_2} = V'_{i_1}$, hence by letting $r = n_{i_2} - n_{i_1}$ we obtain $V_{i_2} - V_{i_1} = r$.
\end{proof}

\autoref{prop-primitive} guarantees that the transfer matrix $T$ verifies, for some $p_0, k$ and $t$ the relation $T^{p+k} = T^p + t$ for $p \geq p_0$. Thanks to \autoref{claim-exact} the relations we obtain for the transfer matrix $T$ directly apply to the 2-domination number. Here are the relations we obtain for $n \leq 12$:
\begin{multicols}{2}
\begin{itemize}
\item $\forall m \geq 3, \gamma_2(1,m) = \gamma_2(1,m-2) + 1$;
\item $\forall m \geq 3, \gamma_2(2,m) = \gamma_2(2,m-1) + 1$;
\item $\forall m \geq 5, \gamma_2(3,m) = \gamma_2(3,m-3) + 4$;
\item $\forall m \geq 8, \gamma_2(4,m) = \gamma_2(4,m-4) + 7$;
\item $\forall m \geq 14, \gamma_2(5,m) = \gamma_2(5,m-7) + 15$;
\item $\forall m \geq 20, \gamma_2(6,m) = \gamma_2(6,m-11) + 28$;
\item $\forall m \geq 31, \gamma_2(7,m) = \gamma_2(7,m-18) + 53$;
\item $\forall m \geq 16, \gamma_2(8,m) = \gamma_2(8,m-3) + 10$;
\item $\forall m \geq 17, \gamma_2(9,m) = \gamma_2(9,m-3) + 11$;
\item $\forall m \geq 14, \gamma_2(10,m) = \gamma_2(10,m-1) + 4$;
\item $\forall m \geq 16, \gamma_2(11,m) = \gamma_2(11,m-3) + 13$;
\item $\forall m \geq 17, \gamma_2(12,m) = \gamma_2(12,m-3) + 14$.
\end{itemize}
\end{multicols}

Thanks to these relations, and to the first values we obtain for each $n$, we deduce the formulas for $\gamma_2(n,m)$, for $1 \leq n \leq 12$. For instance, for $n = 5$ we only need to know the recurrence relation, plus the first twelve values. We stop at $n=12$ here because the method for arbitrarily large $n$ works for $n \geq 13$.

\subsection{Computing 2-domination numbers for arbitrarily large $n$}
We adapt here the method Gonçalves et al. introduced in \cite{rao}. Their idea was to assume that for a sufficiently large grid, the position of the stones would be a projection of an optimal tiling for $\ZZ^2$, except on a fixed-height border of the grid, because not every cell has 4 neighbours at the frontier of a finite grid. This happens to be also the case for the 2-domination problem.

We use the dual concept of \emph{loss} to count how many stones we need to make up for the degree problem of the border. The loss denotes how much "influence" given by the stones of a 2-dominating set was wasted. For instance, two neighbouring stones would cause a loss of 2: each stone cell dominates its stone neighbour which did not need to be dominated by another stone. Instead of computing the minimum number of stones needed, we will compute a lower bound on the minimum loss possible on the border. It happens that this lower bound gives us a direct lower bound on the 2-domination number, and that these bounds are sharp.\\

More formally, given a 2-dominating set $D$ of the $n\times m$-grid, we define the \emph{loss} to be $\ell(D,n,m) = 4|D|-2(nm-|D|)$. The idea behind this formula is simple: each stone contributes to the domination of its four neighbours, and each cell not in $D$ should be dominated twice. The difference between these two quantities is the influence of stones that was "lost", \emph{i.e.} not necessary. The loss function should have several characteristics: while it should be "easy" to compute, it should also be reversible so that given the loss, we can find the 2-domination number.

Here we can indeed reverse the formula to get $|D| = (2nm+\ell(D,n,m))/6$. Let $\ell(n,m)$ be the minimum possible loss over every 2-dominating set $D$. We then obtain $\gamma_2(n,m) = (2nm+\ell(n,m))/6$. The method works: the lower bound we obtain matches the 2-domination numbers (we prove later that it is also an upper bound). Computing the minimum loss over a big grid seems very hard, but we managed to find a lower bound for $\ell(n,m)$ by computing the minimum loss on a fixed-height border of the grid (see \autoref{figure-loss}).

\begin{figure}[h]
\centering
\includegraphics{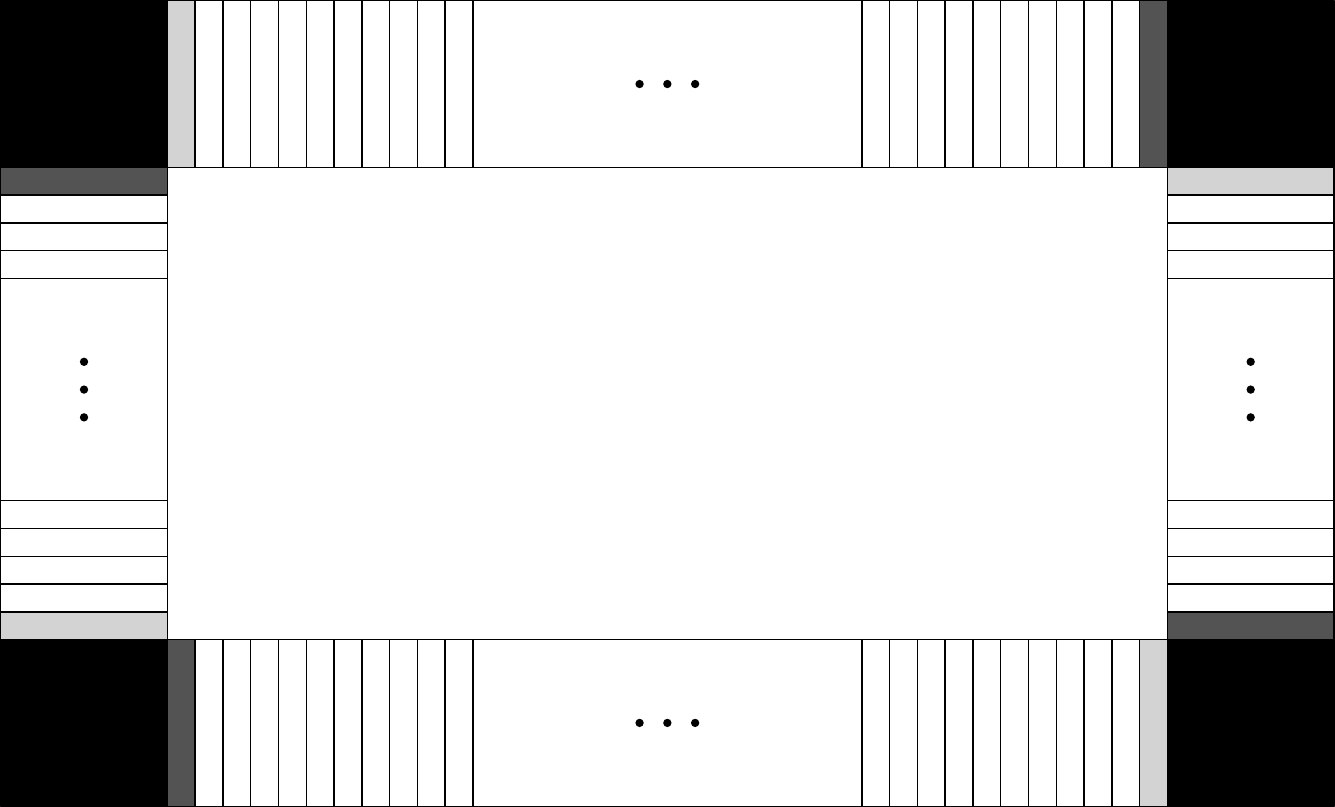}
\caption{The borders of the grid. The four parts coloured in black are the \emph{corners}, and the white parts are the \emph{bands}. The grey cells belong to both the bands and the corners: the ones filled with light grey are the output of a band and input of the corner next to it; the ones in dark grey are the output of a corner and input of the following band.}
\label{figure-loss}
\end{figure}

Computing the loss on a border can be done by adapting the method used to compute the 2-domination number when the number of lines is small. We compute a lower bound on the loss by computing only the loss over the border of the grid. Let $h > 0$ be an integer and an $n \times m$ grid with $n > 2h$ and $m > 2h$. The \emph{border} of height $h$ is the set of cells $(i,j)$ such that either $\min(i, n-1-i) < h$ or $\min(j, m-1-j) < h$. We define the \emph{corners} as the four connected parts of the grid composed of cells $(i,j)$ such that both $\min(i, n-1-i) < h$ and $\min(j, m-1-j) < h$. The remaining four connected parts of the border are called the \emph{bands}. The corners and bands are illustrated in \autoref{figure-loss}. Once again, we get an algorithm which is faster than exhaustive search over the dominating sets by working with states. However, we need to adapt the sets $\VV, \FF, \DD$ and the relation $\RR$ we worked with.

Let us begin with the bottom band. In this subsection, we focus on the bottom $h$ lines of our grid by adapting the sets $\VV, \DD$ and the relation $\RR$. Let assume that $n > h$ and consider the function $\hat{f_i}$
such that if $D$ is a 2-dominating set, $\hat{f_i}(D) = f_i(D)[0], \cdots f_i(D)[h-1]$. $\hat{f_i}$ consists of the bottom $h$ lines of $f_i$.
As in the previous subsection, $\hat{f}_i$ denotes the column $i$ of $\hat{f}$. We begin by defining the \emph{almost valid} states: $\VV_a = \cup_{0\leq i < m}{\{\hat{f}_i(D) : D \textrm{ is a 2-dominating set}\}}$. More explicitely, $S \in \VV_a$ if and only if, for $i \in \{0, \ldots, h-1\}$:
\begin{itemize}[noitemsep, topsep=0pt]
\item if $S[i] = $\textsc{need\_one} then at most one among $S[i-1]$, $S[i+1]$ is \textsc{stone};
\item if $S[i] = $\textsc{ok} then $i \neq 0$ or at least one among $S[i-1]$ and $S[i+1]$ is \textsc{stone};
\end{itemize}
\vspace{5pt}
Notice that there is a distinction depending on whether $i = 0$: the first cell will have a neighbour in the center of the grid, so we need to consider the case when this neighbour is a stone.
We do not need first states here, so we will not define a set $\FF_a$. We define the relation of \emph{almost-compatibility}: if $S, S'\in \VV_a$, $S \RR_a S'$ if and only if for $i \in \{0, \ldots, n-1\}$:
\begin{itemize}[noitemsep, topsep=0pt]
\item if $S[i] = $\textsc{need\_one} then $i = 0$ or $S'[i] = $\textsc{stone};
\item if $S'[i] = $\textsc{need\_one} and $i \neq 0$ then exactly one among $S'[i-1]$, $S'[i+1]$ and $S[i]$ is \textsc{stone};
\item if $S'[0] = $\textsc{need\_one} then at most one among $S'[i-1]$, $S'[i+1]$ and $S[i]$ is \textsc{stone};
\item if $S'[i] = $\textsc{ok} and $i \neq 0$ then at least two among $S'[i-1]$, $S'[i+1]$ and $S[i]$ are \textsc{stone};
\item if $S'[0] = $\textsc{ok} then at least one among $S'[i-1]$, $S'[i+1]$ and $S[i]$ is \textsc{stone}.
\end{itemize}
\vspace{5pt}
Finally, a set $S \in \VV_a$ is \emph{almost-two-dominated} (denoted by $S\in \DD_a$) if all its cell except the upper one are different from \textsc{need\_one}.

We again use the exponentiation of a transfer matrix to compute the minimum loss over a border of a grid. We define the matrix $T_a$ such that $T_a[S][S']$ contains the loss induced by putting state $S'$ after state $S$. By exponentiating the matrix $T_a$ we can compute the minimum loss over a border, excluding the loss induced by the first state alone on itself.

The next step is to compute the loss for corners. A corner is composed of an $h$ by $h$ square, plus an input column and an output column. Let us consider the bottom right corner of \autoref{figure-loss}. The last column of the bottom band is coloured in light grey: it is the input column of the square (and the output column of the band). At the other side of the square, the horizontal "column" filled with dark grey is the output column of the square (and the input column of the next border). Suppose that the input column of the square is in state $A$ and its output column is in state $B$. The loss over the corner is the sum of:
\begin{itemize}[noitemsep, topsep=0pt]
	\item the loss on the corner by $A, B$ and the corner itself;
	\item the loss on $B$ by the corner and $B$ itself;
	\item the loss on $A$ by the corner.
\end{itemize}
\vspace{5pt}
The explanation is simple: the input state was fixed by the loss computation on the band (so its loss so far was already counted) and the corner provides the first state for the next band (so we have to compute its loss so far).
Similarly to $T_a$, we define the matrix $C_a$ for a corner: $C_a[S][S']$ contains the minimum loss over a corner whose input state is $S$ and output state is $S'$ as defined just before (we do not count the loss induced by $S$ alone on itself).

\begin{claim}
$\min_{S \in \VV_a}((T_a^{m-2h-1}C_aT_a^{n-2h-1}C_a)^2[S][S])$ is the minimum loss over the border.
\end{claim}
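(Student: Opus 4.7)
The plan is to unfold the matrix product and interpret each factor as a segment of a cyclic traversal of the border, then use the $(\min, +)$-semantics of matrix multiplication to identify the indicated entry with the minimum loss.

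First, I would observe that the border decomposes into eight pieces arranged cyclically: horizontal band, corner, vertical band, corner, horizontal band, corner, vertical band, corner, and back to the start. The two horizontal bands each span $m - 2h$ columns of the bottom/top band, the two vertical bands each span $n - 2h$ columns (i.e.\ rows) of the left/right band, and all four corners are isomorphic after the obvious rotation/reflection. Consequently the ingredients $\VV_a$, $\DD_a$, $R_a$, $T_a$, and $C_a$ can be used in every piece under the natural identifications between columns of horizontal bands and rows of vertical bands.

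Next I would unfold the product. A horizontal band with $m - 2h$ columns yields $m - 2h - 1$ consecutive column-to-column transitions, contributing $T_a^{m-2h-1}$ by the definition of $T_a$ (recall $T_a[S][S']$ is the loss incurred by placing state $S'$ just after $S$). Each corner contributes one factor $C_a$, where $C_a[A][B]$ is the minimum loss over a corner whose input column is $A$ and output column is $B$, taking into account the three loss pieces spelled out just before the claim (loss on the corner itself; loss on $B$ from the corner and from $B$; loss on $A$ from the corner). Stringing together one horizontal band, one corner, one vertical band, and one corner gives the ``half-cycle'' matrix $M := T_a^{m-2h-1} C_a T_a^{n-2h-1} C_a$, whose $(S, S')$ entry is the minimum total loss accumulated along this half of the border when we start at state $S$ and land in state $S'$ after the second corner. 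By $(\min, +)$-associativity, $M^2[S][S]$ is the minimum total loss over all cyclic sequences of states around the \emph{entire} border which start and end at $S$, and taking $\min_{S \in \VV_a}$ eliminates the arbitrary choice of starting state while enforcing the cyclic consistency condition.

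The main verification, and in my view the main obstacle, is the accounting of per-cell losses: I must show that the attribution scheme built into $T_a$ and $C_a$ partitions the loss on every border cell into exactly one $T_a$-transition or one $C_a$-corner, with no cell double-counted and none omitted. To do this I would argue that, for any 2-dominating set $D$, the induced cyclic sequence $\hat{f}_i(D)$ of column states is compatible with $R_a$ at every step (and almost-two-dominated where needed), and that summing the $T_a$- and $C_a$-entries along this sequence recovers exactly the loss contribution from the border cells of $D$; the careful prescription given for $C_a$ is precisely what prevents the input/output columns, which are shared between a band and a corner, from being counted twice. The converse direction, that the $(\min, +)$-minimum is actually attained by a configuration realisable on the border, follows from the definitions of $\VV_a$ and $R_a$ as the projections of real 2-dominating sets onto the border bands. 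Combining the two directions gives the claimed equality between $\min_{S \in \VV_a} M^2[S][S]$ and the minimum loss over the border.
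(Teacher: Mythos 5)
Your proposal is correct and follows essentially the same route as the paper: unfolding the product factor by factor as a cyclic traversal of the border (band, corner, band, corner, squared), and relying on the convention that $C_a$ does not count the loss of its input state on itself to avoid double-counting the columns shared between bands and corners. The paper's own proof is the same walk through the factors, stated slightly more informally.
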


\begin{proof}
$T_a^{m-2h-1}$ is the minimum loss over a band starting on the output column of the bottom left corner and ending on the input column of the bottom right corner. Hence $T_a^{m-2h-1}C_a$ means computing the minimum loss on the bottom band we have just described, and extending it to the output state of the bottom right corner. As mentioned above, in the corner loss we take the input state as it is (which is exactly what $T^{m-2h-1}$ provides: the loss on the last state by itself and its preceding column was already computed). Since we compute the loss on the output state of the corner, $T_a^{m-2h-1}C_aT_a^{n-2h-1}$ extends the loss to the right band. Now, $T_a^{m-2h-1}C_aT_a^{n-2h-1}C_a$ corresponds to the loss from the output of the bottom left square to the output of the top right square, that is the loss of half the border. By squaring this matrix, we obtain the minimum losses over the whole border of the grid: $(T_a^{m-2h-1}C_aT_a^{n-2h-1})^2[S][S]$ means that we compute the minimum loss by beginning from the $(h+1)^{th}$ column at the bottom in state $S$ and leaving it in state $S$ by the bottom left corner.
\end{proof}

In the rest of this section, we consider that $h = 6$. This value is sufficient to obtain the correct bounds with our program. Here again, we have the problem of computing the minimum loss over borders of arbitrary sizes. However, we may notice that, if we let $H(n,m) = (T_a^{m-13}C_aT_a^{n-13})^2$, there exist some $j_0, k$ and $p$ such that $\forall\; r \geq r_0, T_a^{r+k} = T_a^{r}+p$, so that $H(n+i, m+j) = H(n,m) + (i+j)p \;\;\;\forall\; n,m \geq 13+r_0$. Indeed, the matrix $T_a$ is primitive for the same reasons as for the transfer matrix of \autoref{claim-primitive}.

To complete the proof of \autoref{th-2dom}, we check the values of $\ell(n,m)$ for $13 \leq n \leq m \leq 35$. With these values, plus the recurrence relation on $T_a^{r}$, we achieve the proof of the theorem. Indeed, if $n > 13$ and $33 \leq m \leq 35$ then for all $k \in \NN$:
\begin{align*}
\gamma_2(n,m+3k) = \frac{2n(m+3k)+\ell(n,+3k)}{6} \geq &  \frac{2nm+\ell(n,m)}{6} + nk+2k \\\ \geq & \left\lfloor \frac{(n+2)(m+2/)}{3}-6 \right\rfloor +nk +2k \\ \geq & \left\lfloor \frac{(n+2)(m+3k+2)}{3}-6 \right\rfloor.
\end{align*}
This proves the lower bound for every $13 < n \leq 35$ and $m \in \NN$. To prove it for $n > 35$, it suffices to do the exact same computation, to compute $\gamma_2(n+3k, m)$ for any $m$.
\begin{figure}[h!]
\centering
\includegraphics[scale=0.75]{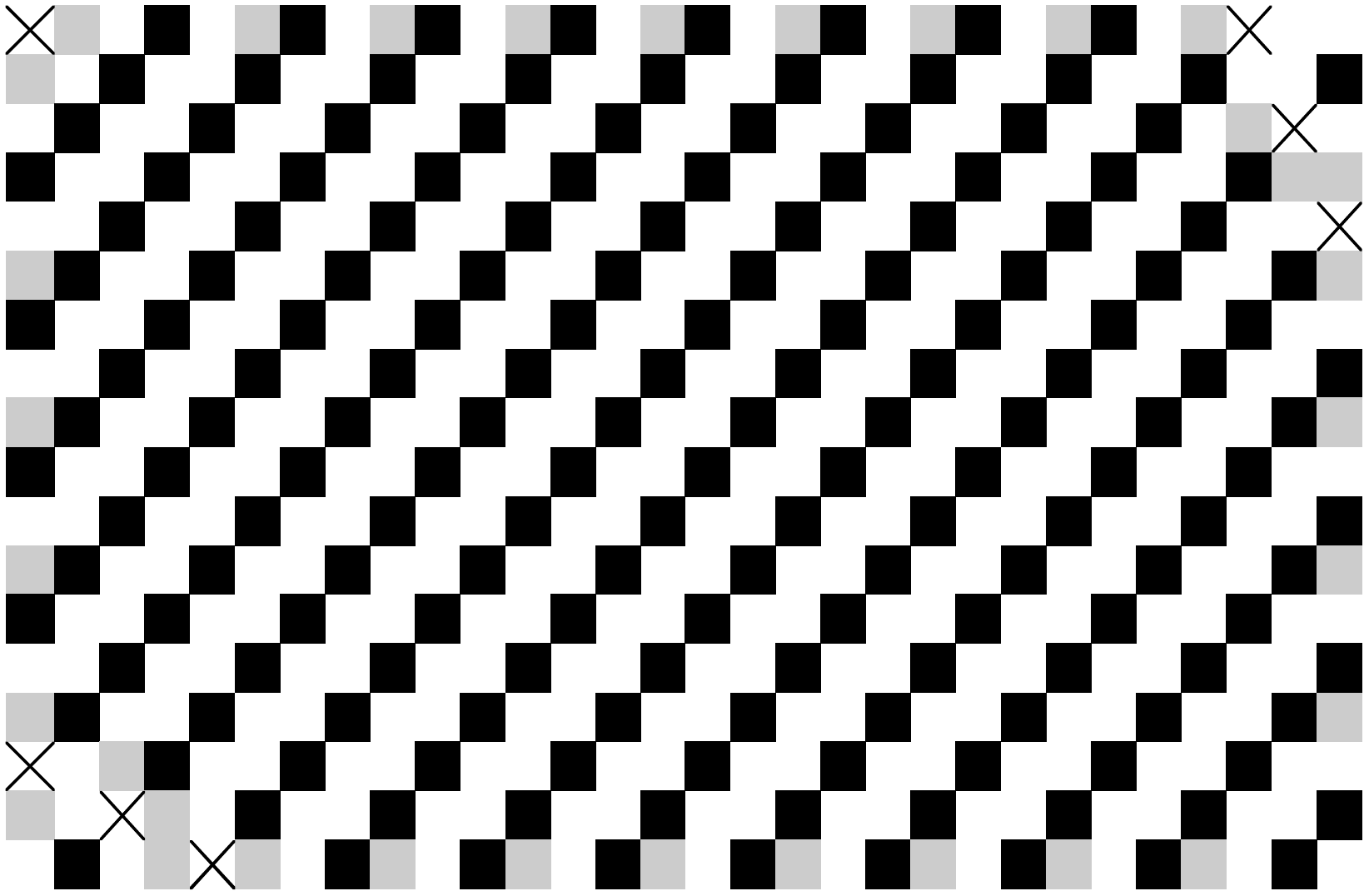}
\caption{Example of an optimal 2-dominating set $D$ on a $18 \!\times\! 30$ grid. $D$ is the set of cells which are gray or black. The black cells and the cells with a cross are the projection of a minimal $\ZZ^2$ 2-dominating set on the grid.}
\label{ex-dominating-set}
\end{figure}

To show that this bound is sharp, we give general 2-dominating sets of the right size. To obtain these 2-dominating sets, we select, for an infinite grid $\ZZ^2$ the 2-dominating set $D = \{(i,j) : i+j \mod 3 = 0\}$ and its rotations. We then take all the different restrictions of these 2-dominating sets for $\ZZ^2$ into a finite $n \!\times\! m$ grid. For each restriction we obtain, we modify each corner of size 6 according to two rules which depend on the pattern in that corner. The two rules are shown in \autoref{ex-dominating-set}: the first rule is used in the top-left corner, and the second rule is used in the top-right and bottom-left corners. A rule corresponds to removing the cells with a cross and adding the grey cells of the corner. Rule 1 could be stated as follows: if the cell at the angle of the grid is in the dominated set, we remove it from the set and add instead its two neighbours. Finally, we put a stone on the cells of the first and last rows and columns which are not 2-dominated. One can show that for $14 \leq n \leq m$ one of the resulting 2-dominating $D_{n,m}$ set has the right size. We can see an example of such a $D_{n,m}$ for a $18 \!\times\! 30$ grid in \autoref{ex-dominating-set}. The first rule is used in the top left corner and the second rule is used in the top right and bottom left corner. No modification needs to be done in the bottom right corner. By counting the number of stones in the regular pattern (black and crossed cells in), removing the number of crossed cells, and adding the number of grey cells, we get $D = \frac{nm+2n+2m}{3}-5$, which is equal to the number in \autoref{th-2dom} when $n$ and $m$ are multiple of 3.

The grid is of size $18 \!\times\! 30$, but it extends immediately to any $n\!\times\! m$ grid when $n$ and $m$ are both greater than 14 and multiple of 3. Applying the same method for $14 \leq n,m$ when the two numbers have other congruences modulo 3 lead to 2-dominating sets having the right size.

\section{Application to the Roman domination problem}
In this section we consider another domination problem: the Roman domination. Formally, a Roman-dominating "set" is a pair $(S_1, S_2)$ such that every vertex $v\notin S_1\cup S_2$ has at least one neighbour in $S_2$. The cost of such a Roman-dominating set is $|S_1|+2|S_2|$. Intuitively, the set $S_1$ is the set of vertices on which we put one stone, and they are dominated by themselves. $S_2$ is the set of vertices on which we put two stones, which makes them dominated, and they dominate their neighbours. The name of the problem comes from the times when the Roman were conquerors: if they wanted to defend one of their regions they could put either one troop, so that the region is guarded, or put two troops (at twice the expense), so that the soldiers could also be used to defend any neighbouring region.

We will prove the following theorem:

\begin{thm}
For all $1 \leq n \leq m$,\\
\[\gamma_R(n,m) =    \left\{
\setstretch{1.5}
\begin{array}{ll}
      \left\lceil\frac{2m}{3}\right\rceil & \quad\textrm{if }n = 1 \\
      m+1 & \quad\textrm{if }n=2 \\
      \left\lceil \frac{3m}{2} \right\rceil& \quad\textrm{if } n=3\textrm{ and } m\mod 4 = 1 \\
      \left\lceil \frac{3m}{2} \right\rceil+1& \quad\textrm{if } n=3\textrm{ and } m\mod 4 \neq 1 \\
      2m+1& \quad\textrm{if } n=4\textrm{ and } m = 5 \\
      2m& \quad\textrm{if } n=4\textrm{ and } m > 5 \\
      \left\lfloor\frac{12m}{5}\right\rfloor+2& \quad\textrm{if } n=5\\
      \left\lfloor\frac{14m}{5}\right\rfloor+2& \quad\textrm{if } n=6\textrm{ and } m\mod 5 \in \{0,3,4\}\\
      \left\lfloor\frac{14m}{5}\right\rfloor+3& \quad\textrm{if } n=6\textrm{ and } m\mod 5 \notin \{0,3,4\}\\
      \left\lfloor\frac{16m}{5}\right\rfloor+2& \quad\textrm{if } n=7 \textrm{ and } m = 7 \textrm{ or }m\mod 5 = 0\\
      \left\lfloor\frac{16m}{5}\right\rfloor+3& \quad\textrm{if } n=7 \textrm{ and } (m > 7\textrm{ and }m\mod 5 \neq 0)\\
      \left\lfloor\frac{18m}{5}\right\rfloor+4& \quad\textrm{if } n=8 \textrm{ and } m \mod 5 = 3\\
      \left\lfloor\frac{18m}{5}\right\rfloor+3& \quad\textrm{if } n=8 \textrm{ and } m \mod 5 \neq 3\\
      \left\lfloor\frac{20m}{5}\right\rfloor+2& \quad\textrm{if } n=9 \textrm{ and } m \mod 5 = 4\\
      \left\lfloor\frac{20m}{5}\right\rfloor+3& \quad\textrm{if } n=9 \textrm{ and } m \mod 5 = 4\\
      
      \left\lfloor \frac{2(n+1)(m+1)-2}{5}\right\rfloor-1 & \quad\textrm{if } n\geq 10 \textrm{ and } n\mod 5 = 4 \textrm{ and } m \mod 5 = 4\\
      \left\lfloor \frac{2(n+1)(m+1)-2}{5}\right\rfloor & \quad\textrm{if } n\geq 10 \textrm{ and } n\mod 5 \neq 4 \textrm{ or } m \mod 5 \neq 4
\end{array}
\right. \]
\label{th-Roman-domination}
\end{thm}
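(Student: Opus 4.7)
The plan is to replay, with the appropriate modifications, the two-stage method developed in Section 2 for 2-domination: a transfer-matrix dynamic program for the small-$n$ rows, followed by a loss analysis on a fixed-height border for $n\geq 10$.

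For the dynamic program, I would work with the cell alphabet $\SSS_R=\{\textsc{two},\textsc{one},\textsc{zero\_ok},\textsc{zero\_need}\}$, where $f(D)[i][j]$ is \textsc{two} (resp.\ \textsc{one}) if $(i,j)\in S_2$ (resp.\ $(i,j)\in S_1$), is \textsc{zero\_ok} if $(i,j)\notin S_1\cup S_2$ and at least one of $(i-1,j),(i,j-1),(i,j+1)$ lies in $S_2$, and is \textsc{zero\_need} otherwise. The sets $\VV,\FF,\DD$ and the relation $\RR$ are written out exactly as in the 2-domination case, with the key rule that a \textsc{zero\_need} entry in $S$ forces the corresponding row of the next column $S'$ to be \textsc{two}. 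Replacing $|S|_{\textsc{stone}}$ by the Roman cost $|S|_{\textsc{one}}+2|S|_{\textsc{two}}$ in the vectors $F,D$ and the matrix $T$ of \autoref{claim-exact} yields $\gamma_R(n,m)=F^{\mathsmaller T}T^{m-1}D$ in $(\min,+)$-algebra. Primitivity of $T$ follows from the argument of \autoref{claim-primitive} by using the all-\textsc{two} column as the universal bridge, and \autoref{prop-primitive} then supplies, for every fixed $n\leq 9$, an ultimately periodic recurrence on $\gamma_R(n,m)$; combined with a finite initial table these yield the closed formulas of \autoref{th-Roman-domination} for $n\leq 9$.

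For $n\geq 10$, I would use the loss function
\[
\ell(D,n,m)=5\bigl(|S_1|+2|S_2|\bigr)-2nm,
\]
which is affine in the cost, so that $\gamma_R(n,m)=(2nm+\ell(n,m))/5$ where $\ell(n,m)$ is the minimum loss over all Roman-dominating pairs of $G_{n,m}$. This choice is natural because the $\ZZ^2$ tiling $S_2=\{(i,j):2i+j\equiv 0\pmod 5\},\ S_1=\varnothing$ achieves loss $0$ and asymptotic density $2/5$, matching the leading order of the theorem; non-negativity of $\ell$ follows from the Roman-domination inequality $|S_1|+5|S_2|\geq nm$. The loss decomposes as a per-cell sum ($+8$ for a \textsc{two}, $+3$ for a \textsc{one}, $-2$ for an empty cell), so the border-analysis machinery of the second subsection transfers verbatim: almost-valid states $\VV_a$, relation $\RR_a$, band matrix $T_a$ and corner matrix $C_a$ at border height $h=6$. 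The lower bound
\[
\ell(n,m)\geq \min_{S\in \VV_a}\bigl((T_a^{m-2h-1}C_aT_a^{n-2h-1}C_a)^2[S][S]\bigr)
\]
is then combined with the primitivity of $T_a$ (proved as before) to obtain $T_a^{r+k}=T_a^r+p$ for $r$ large enough, and therefore $H(n+i,m+j)=H(n,m)+(i+j)p$ in the notation used for 2-domination. A finite computer check of $\ell(n,m)$ for $13\leq n\leq m\leq 35$ then propagates the lower bound to every $n\geq 10$ and $m\geq n$.

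For the matching upper bound I would project the infinite pattern $\{(i,j):2i+j\equiv 0\pmod 5\}$ onto $G_{n,m}$ and repair each of the four $h\times h$ corners by a constant-size rewrite rule depending on the residues of $(n,m)$ modulo $5$, placing additional \textsc{one}-stones on any frontier cells that the projection leaves undominated. Counting the stones produced in each residue class gives exactly the piecewise expression of \autoref{th-Roman-domination} for $n\geq 10$. The main obstacle, as in the 2-domination case, is the finite verification at the transition between the two stages: one has to certify, by running the $(\min,+)$ transfer-matrix program on the Roman-domination states for every $(n,m)$ in the window $[13,35]$, that the corner-patching construction is actually loss-optimal in every residue class and that no alternative boundary configuration can beat it. Once this finite certification is in place, the periodicity of $T_a$ propagates equality to all larger grids and the theorem follows.
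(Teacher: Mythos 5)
Your plan for $n\leq 9$ and for the lower bound when $n\geq 10$ is essentially the paper's: the same four-state alphabet, the same $(\min,+)$ transfer matrix with cost $|S_1|+2|S_2|$ in place of the stone count, primitivity via the all-\textsc{two\_stones} column, and the band/corner loss computation at border height $6$ with a finite machine check on a window of values of $(n,m)$; your loss $5(|S_1|+2|S_2|)-2nm$ is exactly the doubled quantity $2\ell$ that the paper says its program manipulates. One step, however, would fail as written: the justification you give for restricting the loss computation to the border. The signed per-cell decomposition ($+8$ for a \textsc{two\_stones} cell, $+3$ for a \textsc{stone}, $-2$ for an empty cell) has negative terms in the interior, so the border contribution under that decomposition is not a lower bound on the total loss. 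What makes the border argument sound is the paper's rewriting of the loss as a sum of \emph{non-negative} local waste terms ($4-k$ for a doubly-stoned cell with $k$ neighbours, $k-1$ for a cell dominated $k>1$ times, $3/2$ per single stone), which can then be discarded on the interior; you need that decomposition, not the per-cell one, and your ``transfers verbatim'' hides precisely this point.

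The one place where you genuinely diverge from the paper is the upper bound for $n\geq 10$: the paper does not construct Roman-dominating sets at all, but invokes the Grid Theorem from Chapter~4 of Curr\`o's thesis, whose upper bound matches the computed lower bound. Your proposal to project $\{(i,j):2i+j\equiv 0\pmod 5\}$ onto the grid and patch the corners and frontier by residue class is the analogue of what the paper does for 2-domination and would make the result self-contained, but it is substantive additional work (a case analysis over the residue classes of $(n,m)$ modulo $5$, with an explicit stone count in each case) that you have asserted rather than carried out.
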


Since the rules of the Roman domination are a bit different, and we can put two stones on a cell, we need a slight adaptation of the states and the loss. The possible states for a cell are now: \textsc{two\_stones}, \textsc{stone}, \textsc{ok} and \textsc{need\_one}.
A state $S$ is in $\VV$ if and only if for $i \in \{0, \ldots, n-1\}$:
\begin{itemize}[noitemsep, topsep=0pt]
\item if $S[i] = $\textsc{need\_one} then neither $S[i-1]$ nor $S[i+1]$ is \textsc{two\_stones};
\item if $S[i] = $\textsc{stone} then neither $S[i-1]$ nor $S[i+1]$ is \textsc{two\_stones} or \textsc{stone};
\end{itemize}
\vspace{5pt}
The second rule is not required for the coherency of the state, but it is an optimisation which allows us to reduce a lot the number of states. It is justified by the fact that in a minimum Roman dominating set, we can always remove any stone neighbouring a cell with two stones, and if there are two neighbouring cells with a stone each we still have a dominating set of same value by removing one of the stones and putting a second stone on the other cell.

A state $S \in \VV$ is in $\FF$ if and only if for every $i \in \{0, \ldots, n-1\}$, if $S[i] = $\textsc{ok} then at least one among $S[i-1]$ and $S[i+1]$ is \textsc{two\_stones}.

$(S, S')$ is a compatible pair if and only if for $i \in \{0, \ldots, n-1\}$:
\begin{itemize}[noitemsep, topsep=0pt]
\item if $S[i] = $\textsc{need\_one} then $S'[i] = $\textsc{two\_stones};
\item if $S'[i] = $\textsc{need\_one} then $S[i] \neq $\textsc{two\_stones};
\item if $S'[i] = $\textsc{ok} then at least one among $S[i]$, $S'[i-1]$ and $S'[i+1]$ is \textsc{two\_stones};
\item if $S[i] \in \{\text{\textsc{two\_stones}, \textsc{stone}}\}$ then $S'[i] \neq $\textsc{stone};
\item if $S[i] = $\textsc{stone} then $S'[i] \neq $\textsc{two\_stones}.
\end{itemize}
\vspace{5pt}
Finally, a state $S \in \VV$ is in $\DD$ if and only if none of its entry is \textsc{need\_one}.\\

We now need to adapt the loss. Here we define $\ell(n,m) = 5|S_2|+5/2|S_1|-nm = (2|S_2|+|S_1|)2/5 -nm$. Indeed, each cell with two stones dominates 5 cells, and each cell in $S_1$ dominates only itself, but we add to it an additional loss of $3/2$ to penalize its bad ratio of number of dominated cells compared to number of stones used. This allows us to get $\gamma_R(n,m) \geq (\ell(n,m)+nm)5/2$. Note that in the program, what we compute is actually $2\ell(n,m)$ to avoid to manipulate fractions or floating numbers.
Let us define the almost-valid states which, for this problem, coincide with the valid states: $\VV_a = \VV$.
Now if $S, S' \in \VV_a$, $S \RR_a S'$ if and only if for $i \in \{0, \ldots, n-1\}$:
\begin{itemize}[noitemsep, topsep=0pt]
\item if $S[i] = $\textsc{need\_one} and $i \neq 0$ then $S'[i] = $\textsc{two\_stones};
\item if $S'[i] = $\textsc{need\_one} then $S[i] \neq $\textsc{two\_stones};
\item if $S'[i] = $\textsc{ok} and $i \neq 0$ then at least one among $S[i]$, $S'[i-1]$ and $S'[i+1]$ is \textsc{two\_stones};
\item if $S[i] \in \{\text{\textsc{two\_stones}, \textsc{stone}}\}$ then $S'[i] \neq $\textsc{stone};
\item if $S[i] = $\textsc{stone} then $S'[i] \neq $\textsc{two\_stones}. 
\end{itemize}
\vspace{5pt}
Here again we do not give complete details on how we compute the loss. Since we compute twice the loss, each cell with two stones having $k < 4$ neighbours contributes for $4-k$, and each cell dominated by $k > 1$ cells also contributes for $k-1$. Finally, each stone with one cell contributes for $3/2$. All these contributions sum up to make the loss. We recall that in the program we compute twice these values.\\

As in the previous section, we get exact values for "small" values of $n$, and a lower bound for bigger values of $n$. In Chapter 4 of the thesis of Currò \cite{curro}, the Grid Theorem gives an upper bound which matches our lower bound, hence this is the exact value.

\section{Concluding remarks and open problems}
We successfully adapted the techniques introduced in \cite{rao} to other dominating problems, namely the 2-domination and the Roman domination problems. The techniques used in this paper could be reused to show similar results on grids, that is Cartesian products of paths.

Some authors investigated the problem of domination in Cartesian products of cycles (see for instance \cite{kla,pav}). The first part of the technique (when $n$ is fixed and small) may be adapted (with some care) but the second part (for arbitrary number of lines) does not apply directly since a crucial property is that the loss can be concentrated inside the \emph{borders} of the grids.

We mentioned in the introduction that the fact that the method gives sharp bounds are probably related to some tiling properties.
In the case of the 2-domination and the Roman domination, it is not properly speaking a tiling problem, but a generalised tiling problem with some weights (see \autoref{2-roman-figures}). The properties we write below are rather focused on standard tilings.
\begin{figure}[h]
\hspace{-0.35cm}
\begin{subfigure}[The 2-domination shape]
{\hspace{-2cm}\includegraphics[scale=0.6]{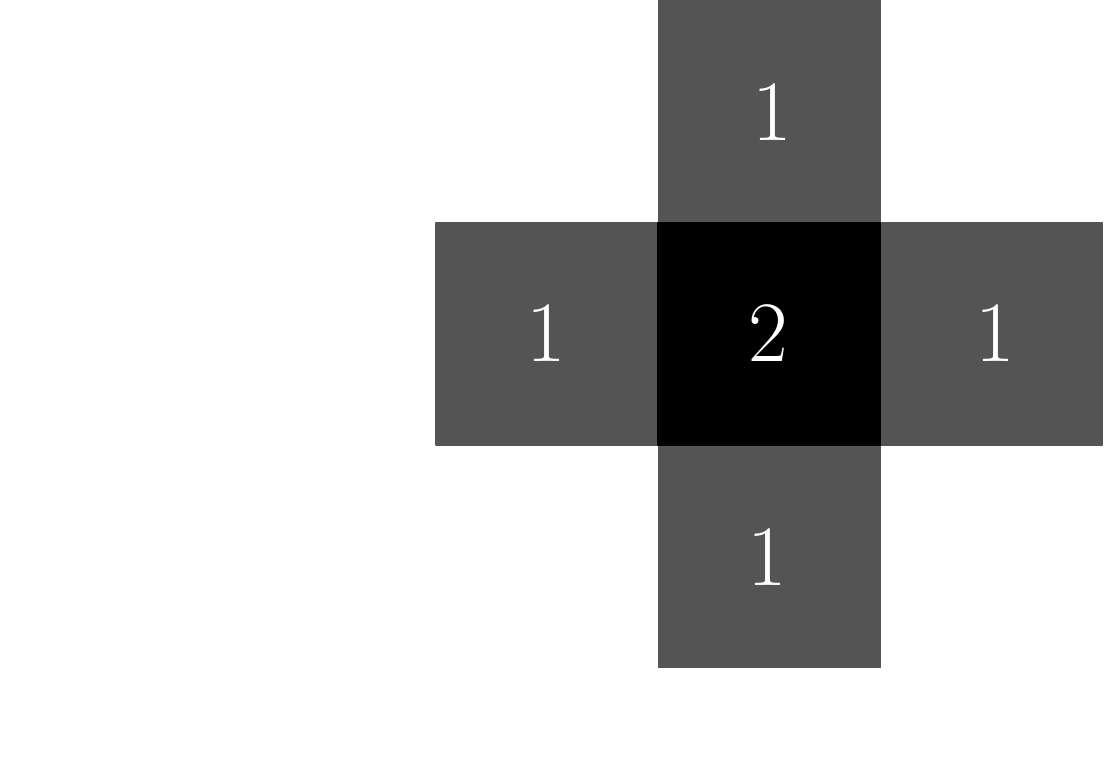}}
\end{subfigure}\hspace{2.55cm}
\begin{subfigure}[The Roman domination shapes]
{\includegraphics[scale=0.6]{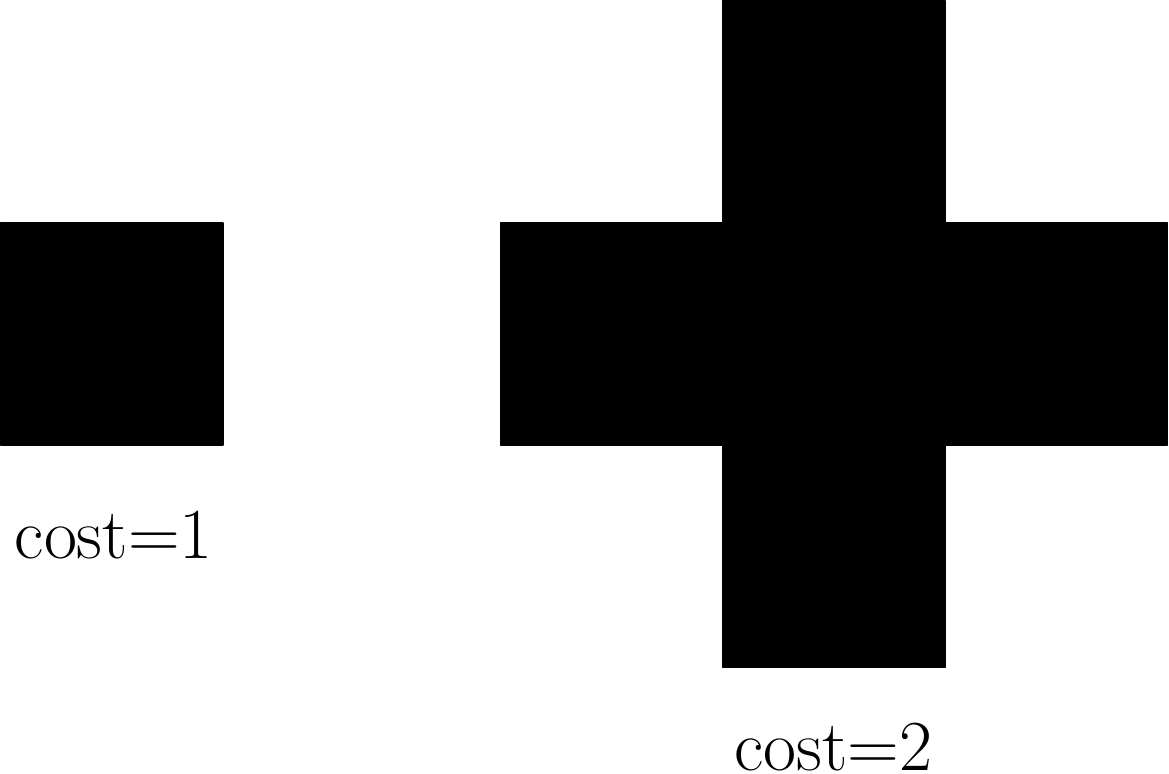}}
\end{subfigure}

\caption{The shapes for the 2-domination and the Roman domination. For the 2-domination, we look for a covering such that the sum of the weights (in white) on a cell is at least 2. For the Roman domination, we cover with two tiles, but they have different costs. We are interested in a covering of minimum weight.}
\label{2-roman-figures}
\end{figure}

The dynamic algorithm or transfer matrix exponentiation for grids of small heights is likely to work for any similar problem. Indeed, for a fixed number of lines and columns, it only needs an adaptation of the special sets of states and of the compatibility relation. To extend it to an infinite number of columns, it is sufficient to have a primitive transfer matrix.
However, properties enabling the method to work for a (arbitrarily) large number of lines are yet to be found. One crucial point is the following property.
\begin{ppty}[Fixed-height border-fixing]
Let $X$ be a shape. $X$ has the fixed-height border-fixing property if there exist $k, n_0, m_0$ such that, for any $n \geq n_0$ and $m \geq m_0$, there exists an optimal \underline{covering} of the $n \times m$ rectangle whose cells at distance greater than $k$ of the border constitute a subtiling of the plane.
\end{ppty}
For instance, the 2-domination shape has this property for $k=3$: any optimal solution to the 2-domination problem can be obtained from an infinite optimal 2-domination set of which we modify only cells at distance at most 3 from the border. Note that, due to the automation feature of the algorithm, this is indeed $k=3$ here even if the program needs to explore borders of size 6 to find the correct bounds.

The fixed-height border-fixing property implies that the bounds given by the method are sharp for some height of band, independent of the size of the rectangle. It seems to be related to the following property.

\begin{ppty}[Crystallisation]
Let $X$ be a shape. We say that $X$ has the crystallisation property if there exists $k \in \NN$ such that for every partial tiling of size $k$ with the shape $X$, either this tiling cannot be extended to tile the plane, or there is a unique way to do so.
\end{ppty}

For instance, the domination shape has this property for $k=2$. On the contrary, the total domination, which has been studied in grids by Gravier \cite{gravier} does not have this property.
The total-domination problem has been studied a lot in other graphs (see \cite{review} for example), but remains open for grids.
The total-domination problem is related to the shapes in \autoref{total-shape-fig}. The small one corresponds to the influence of one "stone": note that it does not dominates itself. The big ones are the unions of two copies of the small one. One can see that tiling the plane with the small shape is equivalent to tiling the plane with the set of the two big shapes: in the small shape, the cell middle cell must be dominated. As shown, the big shape can be vertical or horizontal. The problem with our technique is that a tiling of the plane can, with a certain degree of freedom, mix the vertical and the horizontal big shapes. This probably leads to some non-zero loss in the center of a big grid to be necessary for a covering to be of minimum size. In this case the assumption of the loss on the border being zero would be false, making our technique not usable.

\begin{figure}
\centering
\includegraphics[scale=1]{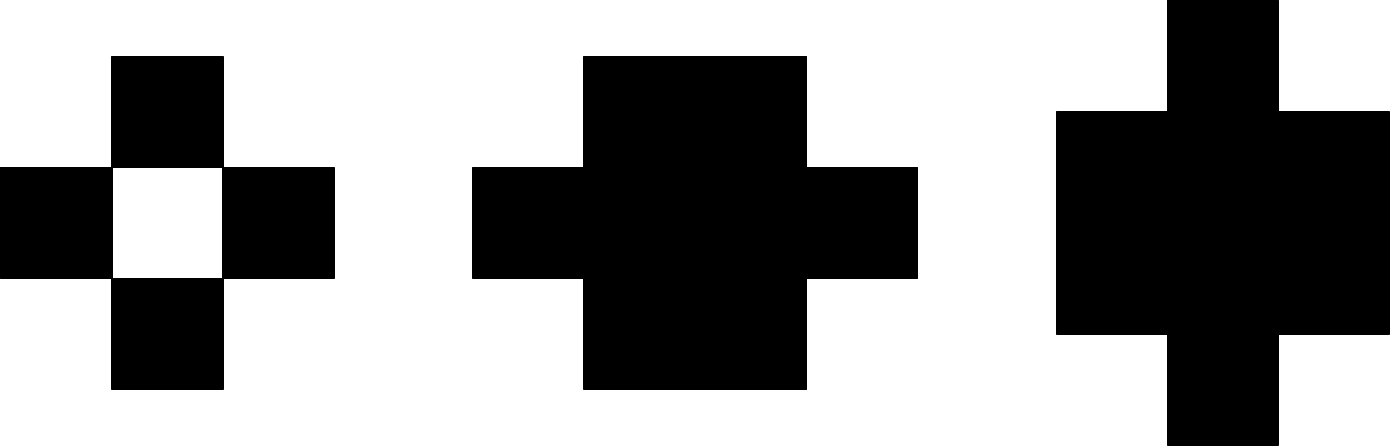}
\caption{The shapes associated with the total domination. The big ones are the two different unions of two copies of the small one. Tiling the plane with the small one boils down to tiling the planes with the two big ones.}
\label{total-shape-fig}
\end{figure}

\begin{conj}
If a shape $X$ tile the plane and has the crystallisation property then it also has the fixed-height border-fixing property.
\end{conj}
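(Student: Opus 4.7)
The plan would be to decompose the conjecture into two sublemmas: an \emph{interior efficiency} statement saying that in any optimal covering of a sufficiently large rectangle the deep interior is actually a partial planar tiling (no overlaps, no excess), and a \emph{rigidity} statement saying that any such partial tiling contains, on a $k$-patch, enough information to determine a unique planar extension. The rigidity sublemma is essentially the crystallisation hypothesis; the interior efficiency sublemma is where the real work sits.

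First I would set up the framework. Since $X$ tiles the plane, fix one planar tiling $T_0$. Projecting $T_0$ onto $[0,n) \times [0,m)$ and patching the $O(n+m)$ uncovered cells near the boundary with $O(n+m)$ extra copies of $X$ yields a covering of size $\tfrac{nm}{|X|}+O(n+m)$, giving an upper bound on the optimum. In particular, for $n,m$ large, any optimal covering $\mathcal{C}$ has at most $O(n+m)$ more tiles than a perfect partition would, so the total multiplicity of overcoverage and the number of tiles that stick out of the rectangle are both $O(n+m)$.

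Next I would try to prove the interior efficiency lemma: there exists $D=D(X,k)$ such that, in any optimal $\mathcal{C}$, every cell at distance at least $D$ from the border is covered exactly once, and the tiles covering such cells lie entirely in the rectangle. The natural argument is a local exchange: if some cell $c$ at distance $\geq D$ is overcovered, take a $k$-patch of $\mathcal{C}$ near $c$ whose cells are covered exactly once (such a patch should exist by pigeonhole, given that overcoverage totals only $O(n+m)$ over an $\Theta(nm)$ region); use crystallisation to extend this patch uniquely to a planar tiling $T$; then excise all tiles of $\mathcal{C}$ intersecting a large window around $c$ and replace them by the corresponding tiles of $T$, up to a transition annulus. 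Provided the transition can be absorbed into the $k$ boundary layer of the window, this strictly reduces the tile count, contradicting optimality. The main obstacle, and the reason this remains a conjecture, is precisely making this surgery work without loss: bridging between the local tile pattern of $\mathcal{C}$ and the uniquely determined $T$ at the edge of the window is not automatic from crystallisation alone, and might require a stronger "quantitative" version (bounded glueing cost between two compatible local patches) or an extra hypothesis such as convexity/connectedness of $X$.

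Once the interior efficiency lemma is granted, the conclusion is short. For any two cells $c_1, c_2$ at distance $\geq D$ from the boundary, the $k$-patches of $\mathcal{C}$ around them are partial tilings, and crystallisation provides unique planar extensions $T_1, T_2$. Since these extensions agree with $\mathcal{C}$ on the overlapping interior and crystallisation forbids two distinct extensions of the same patch, $T_1 = T_2$, so the whole interior of $\mathcal{C}$ is the restriction of one planar tiling $T$. All tiles of $\mathcal{C}$ not belonging to $T$ must then lie within distance $D$ of the boundary, giving the fixed-height border-fixing property with constant $k' = D$ (and $n_0, m_0$ of order $D$). Thus modulo the interior efficiency lemma, the implication goes through; the heart of the problem, and likely where any proof will hinge, is controlling the cost of local repair against a background dictated solely by crystallisation.
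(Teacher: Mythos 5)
The statement you are trying to prove is stated in the paper as an open \emph{conjecture}; the authors give no proof of it, so there is nothing to compare your argument against except the informal discussion surrounding it. Your decomposition into a rigidity step (which is essentially the crystallisation hypothesis) plus an \emph{interior efficiency} lemma is a natural and reasonable plan, and it matches the intuition the authors sketch when they explain why their method works for domination-type problems and fails for total domination. But your proposal is not a proof, and you say so yourself: the interior efficiency lemma is precisely the content of the conjecture, and the surgery argument you outline for it has a genuine hole at the glueing step. Crystallisation gives you \emph{uniqueness} of a planar extension of a $k$-patch; it gives you no quantitative control on the cost of transitioning, across the annulus at the edge of your excision window, between the tiles of the optimal covering $\mathcal{C}$ outside and the planar tiling $T$ inside. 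Without a bounded glueing cost, the replacement could introduce more excess than it removes, and optimality of $\mathcal{C}$ yields no contradiction.

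There is a second, subtler gap in the rigidity half. A $k$-patch of $\mathcal{C}$ whose cells happen to be covered exactly once is a partial tiling, but crystallisation only says that such a patch \emph{either} cannot be extended to a tiling of the plane \emph{or} extends uniquely. You implicitly assume the first branch never occurs deep in the interior of an optimal covering; nothing in the hypotheses rules out an optimal covering whose interior patches are locally tile-like yet globally non-extendable, with the $O(n+m)$ excess smeared thinly across the whole $\Theta(nm)$ interior rather than concentrated near the border. Indeed, this diffuse-loss scenario is exactly what the authors suspect happens for total domination, and excluding it is the heart of the conjecture. So your write-up is a fair map of where the difficulty lies, but both halves of your plan lean on statements (bounded glueing cost, extendability of interior patches) that are not consequences of crystallisation as defined, and the conjecture remains open.
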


These properties could also be used on covering problems even if they have no relation with any domination problem on grids.

\end{document}